\DeclareMathOperator{\Res}{Res}
\newtheorem{theorem}{Theorem}
\theoremstyle{remark}
\newcommand{\LL}{\mathcal{L}} 
\newcommand{\Nav}{{\mathcal{N}}}
\newcommand{\Div}{\nabla\cdot}
\newcommand{\bu}{{\bf u}}
\newcommand{\bx}{x}
\newcommand{\by}{y}
\newcommand{\bz}{z}
\newcommand{\lam}{\lambda^{*}}
\title{Asymptotic Analysis for the Eigenvalues of Peridynamic Operators}
\author{Bacim Alali,   Nathan Albin, and Thinh Dang\\
\
\footnotesize{Department of Mathematics, Kansas State University, Manhattan, KS}
\thanks{This project is based upon work supported by the National Science Foundation under Grant No. 2108588.}
}
\date{}
\begin{document}

\maketitle
\begin{abstract}
Explicit representations of the eigenvalues of the peridynamic operator have been derived in \cite{alali2023linear}. These representations are given in terms of  generalized hypergeometric functions. Asymptotic analysis of the hypergeometric functions is utilized
to identify the asymptotic behavior of the eigenvalues.
We show that the eigenvalues  are bounded when the kernel is integrable  and diverge when the kernel is
singular. The bounds and decay rates are presented explicitly in terms of
the spatial dimension, the integral kernel and the peridynamic  nonlocality.

\end{abstract}
{\it Keywords}:
peridynamcis, eigenvalues, asymptotics.

\section{Introduction}
Peridynamics is a nonlocal formulation of continuum mechanics that is geared towards modeling complex material behavior such as fracture \cite{silling2000reformulation}. The peridynamics model is given by integral equations that do not involve spatial derivatives of the displacement field, which allows for modeling materials' discontinuities \cite{silling_states_2007}.

The peridynamic equation of motion 
for a homogeneous isotropic solid is given by (see for example \cite{silling2000reformulation,silling_states_2007})
\[
\rho(x) \frac{\partial^{2}}{\partial t^2}\bu(x,t)= \LL \bu(\bx,t)+{\bf b}(x,t),
\]
where $\rho$ is the material density, ${\bf b}$ is the applied force, and the operator $\LL$ is given by
\cite{alali2023linear}
\begin{align}
\label{eq:LLdel_homog}
 \nonumber
 \LL \bu(\bx) \nonumber
    &= (n+2)\,\mu\,c\int\limits_{B_\delta(\bx)}
   \frac{(\by-\bx)
\otimes(\by-\bx)}{\|\by-\bx\|^{\beta+2}}\big( \bu(\by)-\bu(\bx)\big)
  \,d\by\\ 
  &\qquad+ \,\frac{(\lam-\mu)c^2}{4}\int\limits_{B_\delta(\bx)}\int\limits_{B_\delta(\by)}
 \frac{\by-\bx}{\|\by-\bx\|^\beta}
\otimes\frac{\bz-\by}{\|\bz-\by\|^\beta}\bu(\bz)
 d\bz d\by.
\end{align}
Here $\bu:\mathbb{R}^n\to \mathbb{R}^n$ denotes the displacement vector-field, $B_\delta(x)$ is the  ball of radius $\delta>0$ centered at $x$, and  the integral kernel exponent satisfies $\beta<n+2$. 

The parameter $\delta$ denotes the nonlocality parameter (also called the {\it peridynamic horizon}), which indicates a nonlocal but finite range interactions, $\mu$ and $\lam$ are the Lam\'{e} parameters of solid mechanics, and
the scaling constant $c$ is 
defined by
\begin{align}
c&:= 
\label{eq:cdel-explicit} 
\frac{2(n+2-\beta)\Gamma\left(\frac{n}{2}+1\right)}    {\pi^{n/2}\delta^{n+2-\beta}}.
\end{align}
With this choice of the scaling constant in \eqref{eq:cdel-explicit}, the peridynamic operator \eqref{eq:LLdel_homog} acting on   sufficiently regular fields  converges to the corresponding  Navier operator  of linear elasticity $\Nav$ for two types of limiting behavior as the nonlocality vanishes either as $\delta\to 0^+$ or as $\beta\to n+2^-$ \cite{alali2023linear}.  Here the Navier operator is given by
\begin{equation}
\label{eq:Nav}
\Nav\bu= (\lam+\mu)\nabla( \Div\bu) + \mu\Delta \bu. 
\end{equation}

For scalar fields $u:\mathbb{R}^n\rightarrow \mathbb{R}$, the analogue to the peridynamic operator $\LL$ in \eqref{eq:LLdel_homog}, is the nonlocal Laplace operator  given by
\cite{alali2019fourier}
\begin{equation}\label{eq:nonlocal_laplacian}
  L u(x) = c\int_{B_\delta(x)}\frac{u(y)-u(x)}{\|y-x\|^\beta}\;dy,
\end{equation}
with $c$ given by \eqref{eq:cdel-explicit}. 
One of the main features of nonlocal Laplace operators of the form \eqref{eq:nonlocal_laplacian} and peridynamic operators of the form \eqref{eq:LLdel_homog} is that the integral kernel is compactly supported and therefore the kernel can be singular, when $n\le \beta <n+2$, or integrable, when $\beta<n$. This  allows for the scalar field $u$ (or the vector-field $\bu$) to have less regularity. For example, for the integrable kernel case when $\beta<n$, the nonlocal Laplacian $L$ is well-defined for $L^1$ fields.  

Nonlocal Laplace operators, such as \eqref{eq:nonlocal_laplacian}, have been studied in the context of nonlocal diffusion, digital image correlation, and nonlocal wave phenomena among other applications, see for example \cite{bobaru2010peridynamicheat,burch2011classical,lehoucq2015novel,seleson2013interface}. Several mathematical and numerical studies have focused on nonlocal Laplace operators including  \cite{radu2017nonlocal,du2012analysis,aksoylu2011variational,alali2020fourier}.


We note that nonlocal scalar phenomena, such as nonlocal diffusion and nonlocal wave propagation, have been also modeled using fractional differential equations, see for example \cite{klafter2012fractional, bucur2016nonlocal, mainardi2010fractional, caffarelli2010drift}.
For the analysis of the fractional Laplacian and related nonlocal equations, see for example \cite{bourgain2001another, brezis2002recognize, caffarelli2007extension, caffarelli2008regularity,agarwal2020special,ruzhansky2017advances}. Works on the eigenvalues of  the fractional Laplacian include \cite{dyda2012fractional,lindgren2014fractional}.  Connections between the fractional Laplacian,  and the nonlocal Laplacian and peridynamic models have been  studied in \cite{du2012analysis, burch2011classical, defterli2015fractional,d2013fractional}. 
However, 
the focus of this work is on the asymptotic behavior of linear peridynamics' eigenvalues. 
\section{Asymptotic behavior of the eigenvalues}
Different representations of the  eigenvalues of the peridynamic operator $\LL$ have been derived in \cite{alali2023linear}. The eigenvalues are characterized through integral representations as well as hypergeometric representations. We focus on the hypergeometric representations as they provide explicit dependence on the spatial dimension $n$ and the nonlocal parameters $\delta$ and $\beta$. 

To simplify the presentation, here and throughout this article, we define
\begin{align*}
    a:=\frac{n+2-\beta}{2},\quad b:=\frac{n+2}{2}\text{ and }z:=z(\nu)=\frac{\delta\|\nu\|}{2}.
\end{align*}
Then, 
for $\nu\in\mathbb{R}^n$, the eigenvalues of $\LL$ are given by
\begin{align}\label{eq:lambda_2}
    \lambda_1(\nu)&=\lambda_{1,1}(\nu)+\lambda_{1,2}(\nu),\\
       \lambda_2(\nu)&=-\mu\|\nu\|^2{}_2F_3\left(1,a;2,b+1,a+1;-z^2\right),
\end{align}
where
\begin{align}
    &\lambda_{1,1}(\nu)=-3\mu\|\nu\|^2 {}_3F_4\left(1,\frac{5}{2},a;2,\frac{3}{2},b+1,a+1;-z^2\right),\label{eq:lambda_{1,1}}\\
    &\lambda_{1,2}(\nu)=
    -\|\nu\|^2(\lambda^*-\mu)\;{}_1F_2\left(a;b,a+1;-z^2\right)^2.\label{eq:lambda_{1,2}}
\end{align}
 
The generalized hypergeometric functions are defined as follows.
\begin{eqnarray*}
    _2F_3(a_1,a_2;b_1,b_2,b_3,z) := \sum_{k=0}^\infty \frac{(a_1)_k(a_2)_k}{(b_1)_k(b_2)_k(b_3)_k}\frac{z^k}{k!},
\end{eqnarray*}
where $(a)_k = a(a+1)(a+2)\cdots(a+k-1)$ is called the rising factorial, also known as the Pochhammer symbol. The hypergeometric functions ${}_1F_2$ and ${}_3F_4$ are defined in a similar way.

The asymptotic behavior of $\lambda_2$ is given by the following result.
\begin{theorem}\label{thm:asymptotics_2}
Let $n\ge 1$, $\delta>0$ and $\beta\le n+2$.  Then, as $\|\nu\|\to\infty$,
\begin{equation*}
\lambda_2(\nu) \sim 
\begin{cases}
-\frac{4\mu ab}{\delta^2(a-1)}-\frac{\Gamma(b+1)\Gamma(a+1)}{\frac{\beta-n}{2}\Gamma(\frac{\beta+2}{2})}\frac{4\mu}{\delta^2}z^{\beta-n}
&\text{if $\beta\ne n$},\\
-\frac{4\mu ab}{\delta^2}\left(
\log z+\gamma-\psi(b)\right)
&\text{if $\beta = n$},
\end{cases}
\end{equation*}
where $\gamma$ is Euler's constant and $\psi$ is the digamma function.
\end{theorem}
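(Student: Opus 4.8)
The plan is to reduce $\lambda_2$ to a single generalized hypergeometric function and then read off its large‑argument behaviour from a Mellin--Barnes contour integral. Since $z=\delta\|\nu\|/2$, setting $x:=z^{2}$ gives $\|\nu\|^{2}=4x/\delta^{2}$, so
\[
\lambda_{2}(\nu)=-\frac{4\mu}{\delta^{2}}\,x\,{}_2F_3\!\left(1,a;2,b+1,a+1;-x\right),
\]
and it suffices to analyze $F(x):=x\,{}_2F_3(1,a;2,b+1,a+1;-x)$ as $x\to\infty$. First I would use $\Gamma(1+s)/\Gamma(2+s)=1/(1+s)$ and $\Gamma(a+s)/\Gamma(a+1+s)=1/(a+s)$, so that the Mellin--Barnes representation of ${}_2F_3$ collapses to
\[
F(x)=\frac{a\,\Gamma(b+1)}{2\pi i}\int_{\mathcal{C}}\frac{\Gamma(-s)\,x^{\,s+1}}{(1+s)(a+s)\,\Gamma(b+1+s)}\,ds,
\]
with $\mathcal{C}$ a vertical line having the poles $s=0,1,2,\dots$ of $\Gamma(-s)$ to its right and the poles $s=-1$ and $s=-a$ to its left.

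The structural observation is that $1/\Gamma(b+1+s)$ is entire, so to the left of $\mathcal{C}$ the integrand has \emph{only} the poles $s=-1$ and $s=-a$. I would then shift $\mathcal{C}$ leftward past both: $\Res_{s=-1}$ produces the constant $\tfrac{ab}{a-1}$ (times the prefactor), and $\Res_{s=-a}$ produces a term proportional to $x^{1-a}$. Recognizing $b+1-a=\tfrac{\beta+2}{2}$, $x^{1-a}=z^{\beta-n}$ and $1-a=\tfrac{\beta-n}{2}$, and multiplying by $-\tfrac{4\mu}{\delta^{2}}$, these two residues are precisely the two terms displayed in the $\beta\neq n$ case, while the integral along the shifted line is $O(x^{1-R})$ and of lower order. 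Equivalently, one may quote the known complete large‑argument expansion of a ${}_pF_q$ with negative argument (whose algebraic part is this residue sum), or exploit the reduction $\,{}_2F_3(1,a;2,b+1,a+1;-z^{2})=\tfrac{ab}{z^{2}}\int_0^1 t^{a-2}\bigl(1-{}_0F_1(;b;-z^{2}t)\bigr)\,dt\,$ together with $\,{}_0F_1(;b;-s)=\Gamma(b)\,s^{-n/4}J_{n/2}(2\sqrt s)$, turning the problem into the asymptotics of an incomplete Bessel integral.

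For $\beta=n$ we have $a=1$, and the pole at $s=-1$ from $\Gamma(1+s)/\Gamma(2+s)$ coalesces with the pole at $s=-a=-1$ from $\Gamma(a+s)/\Gamma(a+1+s)$, giving a \emph{double} pole. I would compute its residue from $\dfrac{d}{ds}\!\left[\dfrac{\Gamma(-s)\,x^{\,s+1}}{\Gamma(b+1+s)}\right]_{s=-1}$, which brings down the logarithmic derivative $\log x-\psi(-s)-\psi(b+1+s)$ at $s=-1$, i.e.\ $\log x+\gamma-\psi(b)$; this is the source of the logarithmic and digamma terms. As a cross‑check I would also recover the $\beta=n$ formula as the confluent limit $\beta\to n$ (that is, $a\to 1$) inside the $\beta\neq n$ formula: expanding $z^{\beta-n}=1+(\beta-n)\log z+\cdots$ and the $\Gamma$‑quotients to first order in $a-1$, the apparent $1/(a-1)$ singularities cancel, and $\psi(2)=1-\gamma$ yields the stated constant.

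The hard part will be the rigorous control of the remainder: one must show that everything beyond the residues at $s=-1$ and $s=-a$ — the shifted‑contour integral, together with the remaining oscillatory contribution in the asymptotic expansion of the ${}_pF_q$ — does not disturb the stated leading behaviour. This requires watching how far the vertical Mellin--Barnes line may be pushed: writing $s=-R+it$, Stirling's formula makes the integrand decay like $|t|^{\,2R-b-3}$, so the line integral converges only for $R<(b+2)/2$, which limits the shift in one step and forces the pole at $s=-a$ to lie inside that half‑plane when it is collected; in the complementary range the oscillatory ("Stokes") part of the expansion must be estimated directly. A secondary technical point is the bookkeeping in the double‑pole/confluent computation so that the $1/(a-1)$ cancellations are explicit. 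Once these are in hand, what remains is routine residue evaluation and simplification of $\Gamma$‑ratios.
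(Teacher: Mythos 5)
Your proposal is essentially the paper's own argument: the paper likewise invokes the large-argument expansion of ${}_2F_3$ (DLMF 16.11.8), discards the oscillatory $E_{2,3}$ terms as $O(|z|^{-(n+7)/2})$, and evaluates the algebraic part $H_{2,3}$ as the sum of residues of the very same integrand $\frac{\Gamma(-s)\,z^{2s}}{(s+1)(s+a)\Gamma(b+1+s)}$ at $s=-1$ and $s=-a$, with the double pole at $s=-1$ handled by the same derivative computation when $\beta=n$. One remark: your $\log x=2\log z$ agrees with the $2\log z-\psi(b)-\psi(1)$ obtained in the paper's proof, whereas the theorem as displayed writes $\log z$, so your computation (and the confluent-limit cross-check) actually supports the proof's $2\log z$ rather than the stated coefficient.
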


\begin{proof}
For large $|z|$,~\cite[Eq.~(16.11.8)]{NIST:DLMF} states that 
\begin{equation*}
_2F_3(1,a;2,b+1,a+1;-z^2) 
\sim a\Gamma(b+1)\left(H_{2,3}(z^2) + E_{2,3}(z^2e^{-i\pi}) + E_{2,3}(z^2e^{i\pi})\right),
\end{equation*}
where $E_{2,3}$ and $H_{2,3}$ are formal series defined in~\cite[Eq.~(16.11.1)]{NIST:DLMF} and~\cite[Eq.~(16.11.2)]{NIST:DLMF} respectively. From those definitions, it yields that
\begin{equation*}
E_{2,3}(z^2e^{\pm i\pi}) 
= (2\pi)^{-1/2}2^{b+2}e^{\pm 2iz}
\sum_{k=0}^{\infty}c_k(\pm 2iz)^{-\frac{2b+5}{2}-k}.
\end{equation*}
Since $2b=n+2$ and $c_0=1$, the $E_{2,3}$ terms decay asymptotically like $|z|^{-\frac{n+7}{2}}$, and do not contribute the the asymptotic behavior described in the theorem.
For the remaining term that relates to the $H_{2,3}$ function, from the remark below~\cite[Eq.~(16.11.5)]{NIST:DLMF}, $H_{2,3}$ can be recognized as the sum of the residues of certain poles of the integrand in~\cite[Eq.~(16.5.1)]{NIST:DLMF}. The integrand for this setting is as following:
\begin{eqnarray*}
f(s) &:=& \frac{\Gamma(1+s)\Gamma(a+s)\Gamma(-s)(z^2)^s}{\Gamma(2+s)\Gamma(b+1+s)\Gamma(a+1+s)}
\\
&=& \frac{\Gamma(-s)z^{2s}}{(s+1)(s+a)\Gamma(b+1+s)}.
\end{eqnarray*}
The two poles of interest are at $s=-1$ and $s=-a$.  The restriction that $\beta\notin\{n+2,n+4,n+6,\ldots\}$ ensures that $-a$ is not a nonnegative integer and, therefore, that $s=-a$ is not a pole of $\Gamma(-s)$.  Thus, there are only two cases to consider, either $\beta\ne n$, which implies that $s=-1$ and $s=-a$ are distinct simple poles of $f$, or $\beta=n$, yielding a double pole at $s=-1$.

\noindent
In the case where $\beta\ne n$, we have
\begin{eqnarray*}
H_{2,3}(z^2) &=& \Res(f,-1) + \Res(f,-a)\\
&=& \frac{z^{-2}}{(a-1)\Gamma(b)} + \frac{\Gamma(a)z^{-2a}}{(1-a)\Gamma(b+1-a)}.
\end{eqnarray*}
When $\beta=n$, the double pole makes the residue more complicated (and gives rise to the logarithmic term):
\begin{align*}
H_{2,3}(z^2)&=\Res(f,-1)\\
&= \left.\frac{d}{ds}\left(\frac{\Gamma(-s)}{\Gamma(b+1+s)}z^{2s}\right)\right|_{s=-1} \\
&= \frac{2\log z-\psi(b)-\psi(1)}{\Gamma(b)}z^{-2}.
\end{align*}
Finally, using~\eqref{eq:lambda_2} and substituting completes the theorem.
\end{proof}

For the asymptotic behavior of $\lambda_1$, we first obtain results for $\lambda_{1,1}$ and $\lambda_{1,2}$.
\begin{theorem}\label{thm:asymptotics_{1,1}}
Let $n\ge 1$, $\delta>0$ and $\beta\le n+2$.  Then, as $\|\nu\|\to\infty$,
\begin{align*}
\lambda_{1,1}(\nu)
&\sim
\begin{cases}
-\frac{4\mu ab}{\delta^2(a-1)}-\frac{n-\beta-1}{n-\beta}\frac{\Gamma(b+1)\Gamma(a+1)}{\Gamma(\frac{\beta+2}{2})}\frac{8\mu}{\delta^2}z^{\beta-n}
&\text{if $\beta\ne n$},\\
-\frac{4\mu ab}{\delta^2}\left(
\log z+\gamma+2-\psi(b)\right)
&\text{if $\beta = n$},
\end{cases}
\end{align*}
where $\gamma$ is Euler's constant and $\psi$ is the digamma function.
\end{theorem}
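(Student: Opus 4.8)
The plan is to follow the template of the proof of Theorem~\ref{thm:asymptotics_2} closely, now applied to the ${}_3F_4$ appearing in $\lambda_{1,1}$. I would first invoke the large-$|z|$ expansion~\cite[Eq.~(16.11.8)]{NIST:DLMF}, which gives
\[
{}_3F_4\!\left(1,\tfrac52,a;2,\tfrac32,b+1,a+1;-z^2\right)\sim\frac{2a\Gamma(b+1)}{3}\Bigl(H_{3,4}(z^2)+E_{3,4}(z^2e^{-i\pi})+E_{3,4}(z^2e^{i\pi})\Bigr),
\]
the prefactor being $\Gamma(2)\Gamma(\tfrac32)\Gamma(b+1)\Gamma(a+1)\big/\bigl(\Gamma(1)\Gamma(\tfrac52)\Gamma(a)\bigr)=\tfrac23 a\Gamma(b+1)$. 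Exactly as for the $E_{2,3}$ terms in the proof of Theorem~\ref{thm:asymptotics_2}, the two $E_{3,4}$ terms are oscillatory with algebraically decaying amplitude — one checks from~\cite[Eq.~(16.11.1)]{NIST:DLMF} that this amplitude is $O(|z|^{-(n+6)/2})$ — and therefore do not contribute to the asymptotics claimed in the theorem.

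Next I would evaluate $H_{3,4}(z^2)$ as the sum of the relevant residues of the Mellin--Barnes integrand of~\cite[Eq.~(16.5.1)]{NIST:DLMF}, which here simplifies to
\[
g(s)=\frac{\Gamma(1+s)\Gamma(\tfrac52+s)\Gamma(a+s)\Gamma(-s)(z^2)^s}{\Gamma(2+s)\Gamma(\tfrac32+s)\Gamma(b+1+s)\Gamma(a+1+s)}=\frac{(s+\tfrac32)\,\Gamma(-s)\,z^{2s}}{(s+1)(s+a)\,\Gamma(b+1+s)}.
\]
The factor $s+\tfrac32$ is entire and $1/\Gamma(b+1+s)$ contributes no poles, so — as in the $\lambda_2$ case, and using $\beta<n+2$ to keep $-a$ off the nonpositive integers — the only left-half-plane poles of $g$ are $s=-1$ and $s=-a$. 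When $\beta\ne n$ these are distinct simple poles and
\[
H_{3,4}(z^2)=\Res(g,-1)+\Res(g,-a)=\frac{z^{-2}}{2(a-1)\Gamma(b)}+\frac{(\tfrac32-a)\Gamma(a)\,z^{-2a}}{(1-a)\Gamma(b+1-a)};
\]
when $\beta=n$ (so $a=1$) there is a double pole at $s=-1$, and $\Res(g,-1)=\frac{d}{ds}\Bigl[(s+\tfrac32)\Gamma(-s)\Gamma(b+1+s)^{-1}z^{2s}\Bigr]\big|_{s=-1}$, which introduces the logarithmic and digamma terms.

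Finally I would substitute into $\lambda_{1,1}(\nu)=-3\mu\|\nu\|^2\,{}_3F_4(\cdots)$ and simplify, using $\|\nu\|^2=4z^2/\delta^2$, $\Gamma(b+1)=b\Gamma(b)$, $a\Gamma(a)=\Gamma(a+1)$, $b+1-a=\tfrac{\beta+2}{2}$, $z^{-2a}=z^{\beta-n-2}$, and the elementary identity $\tfrac{3/2-a}{1-a}=\tfrac{\beta-n+1}{\beta-n}=\tfrac{n-\beta-1}{n-\beta}$ that produces the coefficient of the $z^{\beta-n}$ term. This computation is routine; the only places needing care are tracking the constant $\tfrac23$ together with the two extra factors $\tfrac12=(s+\tfrac32)\big|_{s=-1}$ and $\tfrac32-a=(s+\tfrac32)\big|_{s=-a}$ that distinguish $g$ from the integrand in Theorem~\ref{thm:asymptotics_2}, and confirming the $O(|z|^{-(n+6)/2})$ estimate on the $E_{3,4}$ amplitude. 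I do not anticipate a substantive obstacle beyond this bookkeeping, since the structure of the argument is identical to that of Theorem~\ref{thm:asymptotics_2}.
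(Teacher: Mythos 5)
Your proposal is correct and follows essentially the same route as the paper's own proof: the same DLMF~(16.11.8) expansion with prefactor $\tfrac{2}{3}a\Gamma(b+1)$, the same reduced Mellin--Barnes integrand $\frac{(s+\frac32)\Gamma(-s)z^{2s}}{(s+1)(s+a)\Gamma(b+1+s)}$, the same simple-pole residues for $\beta\ne n$ and double pole at $s=-1$ for $\beta=n$, and the same final substitutions. Two minor points: the $E_{3,4}$ amplitude is actually $O(|z|^{-(n+5)/2})$ rather than $O(|z|^{-(n+6)/2})$ (still negligible after multiplying by $\|\nu\|^2$, so nothing breaks), and in the $\beta=n$ case you stop short of evaluating the derivative at $s=-1$, which is where the extra ``$+2$'' in the constant (from the factor $s+\tfrac32$ hitting $\tfrac12$ at $s=-1$) comes from.
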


\begin{proof}
From the setting,~\cite[Eq.~(16.11.8)]{NIST:DLMF} states that for large $|z|$,
\begin{equation*}
{}_3F_4\left(1,\frac{5}{2},a;2,\frac{3}{2},b+1,a+1;-z^2\right)
\sim\frac{2}{3} a\Gamma(b+1)\left(H_{3,4}(z^2) 
+ E_{3,4}(z^2e^{-i\pi}) + E_{3,4}(z^2e^{i\pi})\right),
\end{equation*}
where $E_{3,4}$ and $H_{3,4}$ are formal series defined in~\cite[Eq.~(16.11.1)]{NIST:DLMF} and~\cite[Eq.~(16.11.2)]{NIST:DLMF} respectively. From those definitions, it yields that
\begin{equation*}
E_{3,4}(z^2e^{\pm i\pi}) = (2\pi)^{-1/2}2^{b+1}e^{\pm 2iz}
\sum_{k=0}^{\infty}c_k(\pm 2iz)^{-\frac{2b+3}{2}-k}.
\end{equation*}
Since $2b=n+2$ and $c_0=1$, the $E_{3,4}$ terms decay asymptotically like $|z|^{-\frac{n+5}{2}}$, and do not contribute the the asymptotic behavior described in the theorem.
\noindent
For the remaining term that relates to the $H_{3,4}$ function, from the remark below~\cite[Eq.~(16.11.5)]{NIST:DLMF}, $H_{3,4}$ can be recognized as the sum of the residues of certain poles of the integrand in~\cite[Eq.~(16.5.1)]{NIST:DLMF}. The integrand for this setting is as the following:
\begin{align*}
f(s) 
&:= \frac{\Gamma(1+s)\Gamma(\frac{5}{2}+s)\Gamma(a+s)\Gamma(-s)(z^2)^s}{\Gamma(2+s)\Gamma(\frac{3}{2}+s)\Gamma(b+1+s)\Gamma(a+1+s)}
\\
&=\frac{(\frac{3}{2}+s)\Gamma(-s)z^{2s}}{(s+1)(s+a)\Gamma(b+1+s)}.
\end{align*}
The two poles of interest are at $s=-1$ and $s=-a$.  The restriction that $\beta\notin\{n+2,n+4,n+6,\ldots\}$ ensures that $-a$ is not a nonnegative integer and, therefore, that $s=-a$ is not a pole of $\Gamma(-s)$.  Thus, there are only two cases to consider, either $\beta\ne n$, which implies that $s=-1$ and $s=-a$ are distinct simple poles of $f$, or $\beta=n$, yielding a double pole at $s=-1$.

\noindent
In the case where $\beta\ne n$, we have
\begin{align*}
H_{3,4}(z^2) &= \Res(f,-1) + \Res(f,-a)\\
&= \frac{\frac{1}{2}z^{-2}}{(a-1)\Gamma(b)} 
+ \frac{(\frac{3}{2}-a)\Gamma(a)z^{-2a}}{(1-a)\Gamma(b+1-a)}.
\end{align*}
When $\beta=n$, the double pole makes the residue more complicated (and gives rise to the logarithmic term):
\begin{align*}
H_{3,4}(z^2)&=\Res(f,-1)\\
&= \left.\frac{d}{ds}\left(\frac{(\frac{3}{2}+s)\Gamma(-s)}{\Gamma(b+1+s)}z^{2s}\right)\right|_{s=-1}\\
&=\left[\frac{\Gamma(-s)}{\Gamma(b+1+s)}z^{2s}\right.\\
&\quad\left.\left. +\left(\frac{3}{2}+s\right)\frac{d}{ds}\left(\frac{\Gamma(-s)}{\Gamma(b+1+s)}z^{2s}\right)\right]\right|_{s=-1}\\
&= \frac{2\log z-\psi(b)-\psi(1)+2}{2\Gamma(b)}z^{-2}.
\end{align*}
Finally, using~\eqref{eq:lambda_{1,1}} and substituting completes the theorem.
\end{proof}

\begin{theorem}\label{thm:asymptotics_{1,2}}
Let $n\ge 1$, $\delta>0$ and $\beta\le n+2$.  Then, as $\|\nu\|\to\infty$,
\begin{equation*}
\lambda_{1,2}(\nu) \sim 
-(\lambda ^*-\mu)\left[\frac{\Gamma(b)\Gamma(a+1)}{\Gamma(\frac{\beta}{2})}\right]^2\frac{4}{\delta^2}z^{2(\beta-(n+1))}.
\end{equation*}
\end{theorem}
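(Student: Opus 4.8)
The plan is to follow the template of the two preceding proofs: reduce to the large-$z$ behavior of a single generalized hypergeometric function, extract its leading term via~\cite[Eq.~(16.11.8)]{NIST:DLMF}, and then assemble the answer. From~\eqref{eq:lambda_{1,2}} we have $\lambda_{1,2}(\nu)=-\|\nu\|^2(\lambda^*-\mu)\,{}_1F_2(a;b,a+1;-z^2)^2$, and since $\|\nu\|=2z/\delta$ we have $\|\nu\|^2=4z^2/\delta^2$. So it suffices to find the leading asymptotics of ${}_1F_2(a;b,a+1;-z^2)$ as $z\to\infty$, to square it, and to multiply by $-\tfrac{4z^2}{\delta^2}(\lambda^*-\mu)$.

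For the hypergeometric asymptotics I would apply~\cite[Eq.~(16.11.8)]{NIST:DLMF} as before, obtaining
\[
{}_1F_2(a;b,a+1;-z^2)\sim\frac{\Gamma(b)\Gamma(a+1)}{\Gamma(a)}\Bigl(H_{1,2}(z^2)+E_{1,2}(z^2e^{-i\pi})+E_{1,2}(z^2e^{i\pi})\Bigr),
\]
where the prefactor $\prod_j\Gamma(b_j)/\prod_j\Gamma(a_j)$ simplifies, using $\Gamma(a+1)=a\Gamma(a)$, to $a\Gamma(b)$. As in the earlier proofs, evaluating~\cite[Eq.~(16.11.1)]{NIST:DLMF} shows the two $E_{1,2}$ terms carry a factor $e^{\pm 2iz}$ together with an algebraic decay (I expect of order $|z|^{-(n+9)/2}$, consistent with the $|z|^{-(n+7)/2}$ and $|z|^{-(n+5)/2}$ rates found for ${}_2F_3$ and ${}_3F_4$), so they are subleading. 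For $H_{1,2}$ I would use the residue description below~\cite[Eq.~(16.11.5)]{NIST:DLMF}: here the Mellin--Barnes integrand is
\[
f(s)=\frac{\Gamma(a+s)\Gamma(-s)(z^2)^s}{\Gamma(b+s)\Gamma(a+1+s)}=\frac{\Gamma(-s)\,z^{2s}}{(s+a)\Gamma(b+s)},
\]
whose only pole relevant to the asymptotic expansion is the simple pole $s=-a$: the restriction $\beta\notin\{n+2,n+4,\dots\}$ keeps $-a$ off the nonnegative integers, so there is no collision with a pole of $\Gamma(-s)$, and — unlike the $\lambda_2$ and $\lambda_{1,1}$ cases — no double pole arises, so no logarithmic term appears.

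Computing $H_{1,2}(z^2)=\Res(f,-a)=\Gamma(a)z^{-2a}/\Gamma(b-a)$ and using $b-a=\beta/2$ and $2a=n+2-\beta$ gives $H_{1,2}(z^2)=\tfrac{\Gamma(a)}{\Gamma(\beta/2)}z^{\beta-n-2}$, hence ${}_1F_2(a;b,a+1;-z^2)\sim\tfrac{\Gamma(b)\Gamma(a+1)}{\Gamma(\beta/2)}z^{\beta-n-2}$. Squaring, multiplying by $-\tfrac{4z^2}{\delta^2}(\lambda^*-\mu)$, and noting $2(\beta-n-2)+2=2(\beta-(n+1))$ then yields the claimed expression.

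The computation is routine once the machinery of the previous two proofs is granted; the one point that really needs care is verifying that the oscillatory $E_{1,2}$ contributions are of strictly smaller order than $z^{\beta-n-2}$, so that the asymptotic equivalence for ${}_1F_2$ survives being squared (a product or square of $\sim$-relations is legitimate only once the subleading terms are controlled). Pinning down the precise algebraic exponent in the $E_{1,2}$ expansion — and, where necessary, restricting to the range of $\beta$ for which $z^{\beta-n-2}$ genuinely dominates — is the crux, paralleling the ``$E$ does not contribute'' remark in the two earlier proofs.
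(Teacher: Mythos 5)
Your argument is essentially the paper's own proof: apply DLMF Eq.~(16.11.8), discard the oscillatory $E_{1,2}$ terms, identify $H_{1,2}(z^2)=\Res(f,-a)=\Gamma(a)z^{-2a}/\Gamma(b-a)$ from the Mellin--Barnes integrand $f(s)=\Gamma(-s)z^{2s}/\bigl((s+a)\Gamma(b+s)\bigr)$, and substitute into the formula for $\lambda_{1,2}$ using $b-a=\beta/2$. The one slip is your guessed decay exponent: DLMF Eq.~(16.11.1) gives $E_{1,2}(z^2e^{\pm i\pi})=O(|z|^{-(n+3)/2})$, not $|z|^{-(n+9)/2}$, which the paper likewise simply declares negligible; your caveat that the residue term $z^{\beta-n-2}$ dominates only for suitable $\beta$ (here $\beta>(n+1)/2$) is a fair point that the paper does not address either.
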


\begin{proof}
From ~\cite[Eq.~(16.11.8)]{NIST:DLMF}, it implies that for large $|z|$,
\begin{equation*}
_1F_2(a;b,a+1;-z^2) 
\sim a\Gamma(b)\left(H_{1,2}(z^2) + E_{1,2}(z^2e^{-i\pi}) + E_{1,2}(z^2e^{i\pi})\right),
\end{equation*}
where $E_{1,2}$ and $H_{1,2}$ are formal series defined in~\cite[Eq.~(16.11.1)]{NIST:DLMF} and~\cite[Eq.~(16.11.2)]{NIST:DLMF} respectively. From those definitions, it yields that
\begin{equation*}
E_{1,2}(z^2e^{\pm i\pi}) 
= (2\pi)^{-1/2}2^{b+1}e^{\pm 2iz}
\sum_{k=0}^{\infty}c_k(\pm 2iz)^{-\frac{2b+1}{2}-k}.
\end{equation*}
Since $2b=n+2$ and $c_0=1$, the $E_{1,2}$ terms decay asymptotically like $|z|^{-\frac{n+3}{2}}$, and do not contribute the the asymptotic behavior described in the theorem.
\noindent
For the remaining term that relates to the $H_{1,2}$ function, from the remark below~\cite[Eq.~(16.11.5)]{NIST:DLMF}, $H_{1,2}$ can be recognized as the sum of the residues of certain poles of the integrand in~\cite[Eq.~(16.5.1)]{NIST:DLMF}. The integrand for this setting is as following:
\begin{eqnarray*}
f(s) &:=& \frac{\Gamma(a+s)}{\Gamma(b+s)\Gamma(a+1+s)}
\Gamma(-s)(z^2)^s\\
&=& \frac{\Gamma(-s)}{(s+a)\Gamma(b+s)}z^{2s}.
\end{eqnarray*}
The only pole of interest is at $s=-a$.  The restriction that $\beta\notin\{n+2,n+4,n+6,\ldots\}$ ensures that $-a$ is not a non-negative integer and, therefore, that $s=-a$ is not a pole of $\Gamma(-s)$.  Thus
\begin{equation*}
H_{1,2}(z^2) = \Res(f,-a)
= \frac{\Gamma(a)}{\Gamma(b-a)}z^{-2a}.
\end{equation*}
Finally, using~\eqref{eq:lambda_{1,2}} and substituting completes the theorem.
\end{proof}

The asymptotic behavior of $\lambda_1(\nu)$ can be summarized in the following theorem.
\begin{theorem}\label{thm:asymptotics_1}
Let $n\ge 1$, $\delta>0$ and $\beta\le n+2$.  Then, as $\|\nu\|\to\infty$,
\[
\lambda_{1}(\nu) \sim 
-(\lambda ^*-\mu)\!\left[\frac{\Gamma(b)\Gamma(a+1)}{\Gamma(\frac{\beta}{2})}\right]^2\frac{4}{\delta^2}z^{2(\beta-(n+1))}
+
\begin{cases}
-\frac{4\mu ab}{\delta^2(a-1)}-\frac{n-\beta-1}{n-\beta}\frac{\Gamma(b+1)\Gamma(a+1)}{\Gamma(\frac{\beta+2}{2})}\frac{8\mu}{\delta^2}z^{\beta-n}
\!\!\!\!&\text{if $\beta\ne n$},\\
-\frac{4\mu ab}{\delta^2}\left(
\log z+\gamma+2-\psi(b)\right)
&\text{if $\beta = n$},
\end{cases}
\]
where $\gamma$ is Euler's constant and $\psi$ is the digamma function.
\end{theorem}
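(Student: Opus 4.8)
The plan is to obtain Theorem~\ref{thm:asymptotics_1} as an immediate consequence of the three preceding results by recalling that, from \eqref{eq:lambda_2}, $\lambda_1(\nu)=\lambda_{1,1}(\nu)+\lambda_{1,2}(\nu)$, so the asymptotic expansion of $\lambda_1$ is the sum of the expansions supplied by Theorem~\ref{thm:asymptotics_{1,1}} and Theorem~\ref{thm:asymptotics_{1,2}}. Concretely, I would first write down the $\lambda_{1,2}$ term from Theorem~\ref{thm:asymptotics_{1,2}}, namely $-(\lambda^*-\mu)\left[\Gamma(b)\Gamma(a+1)/\Gamma(\tfrac{\beta}{2})\right]^2\tfrac{4}{\delta^2}z^{2(\beta-(n+1))}$, and then append the $\lambda_{1,1}$ contribution from Theorem~\ref{thm:asymptotics_{1,1}}, which is the case-split expression (constant plus $z^{\beta-n}$ correction when $\beta\ne n$, logarithmic when $\beta=n$).

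The only genuine point requiring care is to verify that adding the two expansions does not create cancellations or reorderings of the dominant terms, i.e.\ that the leading-order description is faithfully the formal sum. I would argue this by comparing exponents: the $\lambda_{1,2}$ term scales like $z^{2(\beta-n-1)}=z^{2\beta-2n-2}$, while in the $\beta\ne n$ case $\lambda_{1,1}$ has a bounded (order $z^0$) leading piece and an $z^{\beta-n}$ correction. Since $\beta\le n+2$ and the singular regime of interest is $\beta\ge n$ for divergence, one checks that $2\beta-2n-2$ and $\beta-n$ are distinct whenever $\beta\ne n+2$, and in the integrable range $\beta<n$ both correction exponents are negative so the constant $-4\mu ab/(\delta^2(a-1))$ dominates; in either case the two families of terms live at different orders (or the $\lambda_{1,2}$ term simply adds a term of its own order), so no collision occurs and writing the answer as a sum is legitimate. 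When $\beta=n$ the $\lambda_{1,2}$ term is $O(z^{-2})$, strictly subdominant to the logarithm, so it is absorbed into the error and the stated formula — logarithm from $\lambda_{1,1}$ plus the $\lambda_{1,2}$ monomial written out for bookkeeping — is consistent.

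Thus the proof is essentially one line: invoke $\lambda_1=\lambda_{1,1}+\lambda_{1,2}$, substitute the conclusions of Theorems~\ref{thm:asymptotics_{1,1}} and~\ref{thm:asymptotics_{1,2}}, and collect terms. I do not anticipate a real obstacle; the mild bookkeeping issue is simply recording both contributions in a single displayed formula with the case distinction inherited from Theorem~\ref{thm:asymptotics_{1,1}}, and noting (perhaps in a sentence or a remark) which term actually dominates in each subrange of $\beta$ so the reader understands the structure of the asymptotics — bounded eigenvalues when $\beta<n$, and divergence governed either by $z^{\beta-n}$, $z^{2(\beta-n-1)}$, or $\log z$ depending on the relative sizes of $\beta$, $n$, and the sign of $\lambda^*-\mu$.
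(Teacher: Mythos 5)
Your proposal matches the paper exactly: the paper states Theorem~\ref{thm:asymptotics_1} as a direct summary obtained by adding the expansions of Theorems~\ref{thm:asymptotics_{1,1}} and~\ref{thm:asymptotics_{1,2}} via $\lambda_1=\lambda_{1,1}+\lambda_{1,2}$, offering no further argument. Your additional check that the exponents $2(\beta-n-1)$, $\beta-n$, and $0$ (or $\log z$ when $\beta=n$) do not collide at leading order is a sensible refinement the paper leaves implicit, and it is correct.
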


\begin{figure*}
\centering
\includegraphics[width=\textwidth]{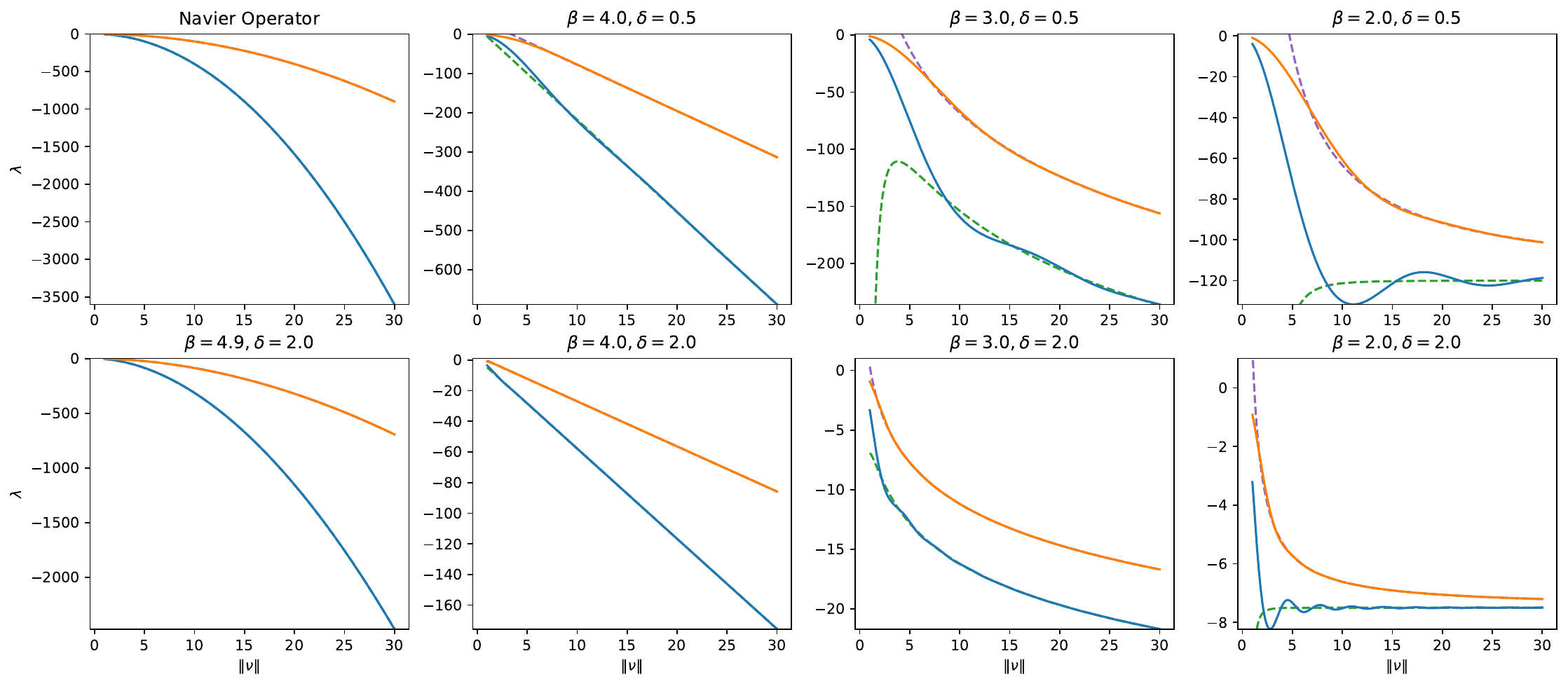}
\caption{Comparison between the eigenvalues, $\lambda_1(\nu)$ and $\lambda_2(\nu)$, and their asymptotic approximations given in Theorems~\ref{thm:asymptotics_1} and~\ref{thm:asymptotics_2} for the 3D case. Here $\|\nu\|$ (horizontal axis) is sampled at $1000$ equispaced points in the interval $[0,30]$ and $\delta$ and $\beta$ are as given in the titles.  The shear modulus and the second Lam\'{e} parameter are given by $\mu=1$ and $\lam=2$. For each plot, the upper solid curve shows $\lambda_2(\nu)$, the lower solid curve shows $\lambda_1(\nu)$, and the dashed curves show the corresponding asymptotic approximations.}
\label{fig:asymptotics3D}
\end{figure*}

\begin{figure*}
\centering
\includegraphics[width=\textwidth]{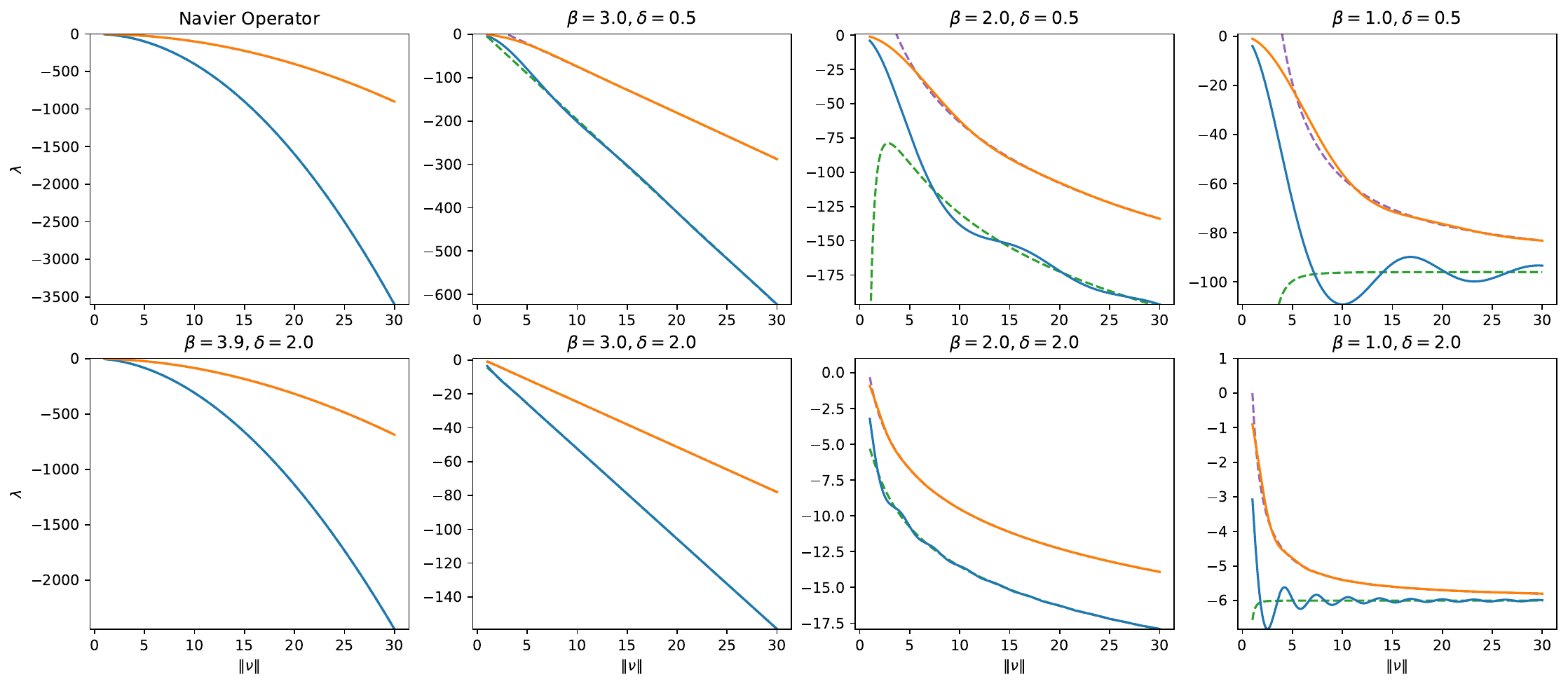}
\caption{Comparison between the eigenvalues, $\lambda_1(\nu)$ and $\lambda_2(\nu)$, and their asymptotic approximations given in Theorems~\ref{thm:asymptotics_1} and~\ref{thm:asymptotics_2} for the 2D case. Here $\|\nu\|$ (horizontal axis) is sampled at $1000$ equispaced points in the interval $[0,30]$ and $\delta$ and $\beta$ are as given in the titles.  The shear modulus and the second Lam\'{e} parameter are given by $\mu=1$ and $\lam=2$. For each plot, the upper solid curve shows $\lambda_2(\nu)$, the lower solid curve shows $\lambda_1(\nu)$, and the dashed curves show the corresponding asymptotic approximations.}
\label{fig:asymptotics2D}
\end{figure*}

\section{Discussion}

Figures~\ref{fig:asymptotics3D} and \ref{fig:asymptotics2D} show the eigenvalues of the three-dimensional and the two-dimensional operator, $\mathcal{L}$, along with the corresponding asymptotic approximations for several choices of peridynamic parameters, $\beta$ and $\delta$. Note that the overall shape of the eigenvalue curves are determined by the parameter, $\beta$, while $\delta$ affects the scaling of the graphs. When $\beta<n$, all eigenvalues of $\mathcal{L}$ are bounded. For $\beta \ge n$, all eigenvalues are unbounded; they diverge toward $-\infty$ at a rate given by the asymptotic formulas. When $\beta=n$, this rate is logarithmic, otherwise the eigenvalues diverge toward $-\infty$ at a rate comparable to $\|\nu\|^{\beta-n}$. For both the three-dimensional and the second dimensional cases, the plots show asymptotic linear growth when $\beta=n+1$ and approach quadratic growth as $\beta$ approaches $n+2$. From the plots, one can observe that the eigenvalues of $\mathcal{L}$ converge pointwise to those of the Navier operator~\eqref{eq:Nav} as $\beta\to n+2$, a fact that can be verified by substituting $\beta=n+2$ into the eigenvalue formulas~\eqref{eq:lambda_{1,1}},~\eqref{eq:lambda_{1,2}} and~\eqref{eq:lambda_2}. Similar convergence of the eigenvalues to those of $\mathcal{N}$ can be observed in taking $\delta\to 0$~\cite{alali2023linear}.

The asymptotic growth rates (or boundedness) of the eigenvalues of $\mathcal{L}$, as given in Theorem~\ref{thm:asymptotics_2} and Theorem~\ref{thm:asymptotics_1}, is a valuable tool in the study of peridynamic equations. Similar approximations have been instrumental in establishing the regularity of solutions for nonlocal scalar problems, such as the nonlocal analogs of the Poisson and heat equations~\cite{alali2019fourier,mustapha2023regularity}. Moreover, knowledge of the asymptotic behavior of the eigenvalues has proven useful in establishing convergence of solutions, e.g., the convergence of solutions to the nonlocal Poisson and heat equations to the solutions of the corresponding classical equations
as $\beta\to n+2$ or $\delta\to 0$~\cite{alali2019fourier,mustapha2023regularity}. Pointwise convergence of the multipliers is not sufficient to prove convergence of solutions.
It is expected that the asymptotic results given in Theorems \ref{thm:asymptotics_2}  and \ref{thm:asymptotics_1} will lead to analogous results for peridynamic vector equations.

\bibliographystyle{acm}
\bibliography{refs}

\begin{thebibliography}{10}

\bibitem{agarwal2020special}
{\sc Agarwal, P., Agarwal, R.~P., and Ruzhansky, M.}
\newblock {\em Special functions and analysis of differential equations}.
\newblock CRC Press, 2020.

\bibitem{aksoylu2011variational}
{\sc Aksoylu, B., and Parks, M.~L.}
\newblock Variational theory and domain decomposition for nonlocal problems.
\newblock {\em Applied Mathematics and Computation 217}, 14 (2011), 6498--6515.

\bibitem{alali2020fourier}
{\sc Alali, B., and Albin, N.}
\newblock Fourier spectral methods for nonlocal models.
\newblock {\em Journal of Peridynamics and Nonlocal Modeling 2}, 3 (2020),
  317--335.

\bibitem{alali2019fourier}
{\sc Alali, B., and Albin, N.}
\newblock Fourier multipliers for nonlocal {Laplace} operators.
\newblock {\em Applicable Analysis 100}, 12 (2021), 2526--2546.

\bibitem{alali2023linear}
{\sc Alali, B., and Albin, N.}
\newblock Linear peridynamics fourier multipliers and eigenvalues.
\newblock {\em Journal of Peridynamics and Nonlocal Modeling\/} (2023), 1--24.

\bibitem{bobaru2010peridynamicheat}
{\sc Bobaru, F., and Duangpanya, M.}
\newblock The peridynamic formulation for transient heat conduction.
\newblock {\em International Journal of Heat and Mass Transfer 53}, 19-20
  (2010), 4047--4059.

\bibitem{bourgain2001another}
{\sc Bourgain, J., Brezis, H., and Mironescu, P.}
\newblock Another look at sobolev spaces.
\newblock hal-00747692, 2001.

\bibitem{brezis2002recognize}
{\sc Brezis, H.}
\newblock How to recognize constant functions. connections with sobolev spaces.
\newblock {\em Russian Mathematical Surveys 57}, 4 (2002), 693--708.

\bibitem{bucur2016nonlocal}
{\sc Bucur, C., and Valdinoci, E.}
\newblock {\em Nonlocal diffusion and applications}, vol.~20.
\newblock Springer, 2016.

\bibitem{burch2011classical}
{\sc Burch, N., and Lehoucq, R.}
\newblock Classical, nonlocal, and fractional diffusion equations on bounded
  domains.
\newblock {\em International Journal for Multiscale Computational Engineering
  9}, 6 (2011).

\bibitem{caffarelli2007extension}
{\sc Caffarelli, L., and Silvestre, L.}
\newblock An extension problem related to the fractional laplacian.
\newblock {\em Communications in partial differential equations 32}, 8 (2007),
  1245--1260.

\bibitem{caffarelli2008regularity}
{\sc Caffarelli, L.~A., Salsa, S., and Silvestre, L.}
\newblock Regularity estimates for the solution and the free boundary of the
  obstacle problem for the fractional laplacian.
\newblock {\em Inventiones mathematicae 171}, 2 (2008), 425--461.

\bibitem{caffarelli2010drift}
{\sc Caffarelli, L.~A., and Vasseur, A.}
\newblock Drift diffusion equations with fractional diffusion and the
  quasi-geostrophic equation.
\newblock {\em Annals of Mathematics\/} (2010), 1903--1930.

\bibitem{defterli2015fractional}
{\sc Defterli, O., D’Elia, M., Du, Q., Gunzburger, M., Lehoucq, R., and
  Meerschaert, M.~M.}
\newblock Fractional diffusion on bounded domains.
\newblock {\em Fractional Calculus and Applied Analysis 18}, 2 (2015),
  342--360.

\bibitem{NIST:DLMF}
{\it NIST Digital Library of Mathematical Functions}, Release 1.0.17 of
  2017-12-22.
\newblock F.~W.~J. Olver, A.~B. {Olde Daalhuis}, D.~W. Lozier, B.~I. Schneider,
  R.~F. Boisvert, C.~W. Clark, B.~R. Miller and B.~V. Saunders, eds.

\bibitem{du2012analysis}
{\sc Du, Q., Gunzburger, M., Lehoucq, R.~B., and Zhou, K.}
\newblock Analysis and approximation of nonlocal diffusion problems with volume
  constraints.
\newblock {\em SIAM review 54}, 4 (2012), 667--696.

\bibitem{dyda2012fractional}
{\sc Dyda, B.}
\newblock Fractional calculus for power functions and eigenvalues of the
  fractional laplacian.
\newblock {\em Fractional calculus and applied analysis 15}, 4 (2012),
  536--555.

\bibitem{d2013fractional}
{\sc D’Elia, M., and Gunzburger, M.}
\newblock The fractional laplacian operator on bounded domains as a special
  case of the nonlocal diffusion operator.
\newblock {\em Computers \& Mathematics with Applications 66}, 7 (2013),
  1245--1260.

\bibitem{klafter2012fractional}
{\sc Klafter, J., Lim, S., and Metzler, R.}
\newblock {\em Fractional dynamics: recent advances}.
\newblock World Scientific, 2012.

\bibitem{lehoucq2015novel}
{\sc Lehoucq, R.~B., Reu, P.~L., and Turner, D.~Z.}
\newblock A novel class of strain measures for digital image correlation.
\newblock {\em Strain 51}, 4 (2015), 265--275.

\bibitem{lindgren2014fractional}
{\sc Lindgren, E., and Lindqvist, P.}
\newblock Fractional eigenvalues.
\newblock {\em Calculus of Variations and Partial Differential Equations 49},
  1-2 (2014), 795--826.

\bibitem{mainardi2010fractional}
{\sc Mainardi, F.}
\newblock {\em Fractional calculus and waves in linear viscoelasticity: an
  introduction to mathematical models}.
\newblock World Scientific, 2010.

\bibitem{mustapha2023regularity}
{\sc Mustapha, I., Alali, B., and Albin, N.}
\newblock Regularity of solutions for nonlocal diffusion equations on periodic
  distributions.
\newblock {\em Journal of Integral Equations and Applications 35}, 1 (2023),
  81--104.

\bibitem{radu2017nonlocal}
{\sc Radu, P., Toundykov, D., and Trageser, J.}
\newblock A nonlocal biharmonic operator and its connection with the classical
  analogue.
\newblock {\em Archive for Rational Mechanics and Analysis 223}, 2 (2017),
  845--880.

\bibitem{ruzhansky2017advances}
{\sc Ruzhansky, M., Cho, Y.~J., Agarwal, P., Area, I., et~al.}
\newblock {\em Advances in real and complex analysis with applications}.
\newblock Springer, 2017.

\bibitem{seleson2013interface}
{\sc Seleson, P., Gunzburger, M., and Parks, M.~L.}
\newblock Interface problems in nonlocal diffusion and sharp transitions
  between local and nonlocal domains.
\newblock {\em Computer Methods in Applied Mechanics and Engineering 266\/}
  (2013), 185--204.

\bibitem{silling2000reformulation}
{\sc Silling, S.~A.}
\newblock Reformulation of elasticity theory for discontinuities and long-range
  forces.
\newblock {\em Journal of the Mechanics and Physics of Solids 48}, 1 (2000),
  175--209.

\bibitem{silling_states_2007}
{\sc Silling, S.~A., Epton, M., Weckner, O., Xu, J., and Askari, E.}
\newblock Peridynamic states and constitutive modeling.
\newblock {\em Journal of Elasticity 88}, 2 (2007), 151--184.

\end{thebibliography}

\end{document}